\tikzset{arrows={[scale=1]}}
\tikzset{every edge/.style={draw,->,>=Latex,auto}}
\DeclareSymbolFont{rsfscript}{OMS}{rsfs}{m}{n}
\DeclareSymbolFontAlphabet{\mathrsfs}{rsfscript}
\DeclareMathOperator{\dt}{.}
\renewcommand{\thefootnote}{\fnsymbol{footnote}}
\newtheorem{thm}{Theorem}
\newtheorem{prop}[thm]{Proposition}
\newtheorem{lem}[thm]{Lemma}
\newtheorem{rmk}[thm]{Remark}
\newtheorem{cor}[thm]{Corollary}
\newtheorem*{coni}{Conjecture RS}
\newcommand{\sa}{synchronizing DFAs}
\newcommand{\scn}{strongly connected}
\newcommand{\mA}{\mathrsfs{A}}
\newcommand{\ov}{\overline}
\begin{document}

\thispagestyle{empty}

\begin{flushleft}

\Large

\textbf{Synchronization of primitive automata}

\end{flushleft}

\normalsize

\vspace{-2.25mm}

\noindent\rule{\textwidth}{1.5pt}

\vspace{2mm}

\begin{flushright}

\textsc{M.~V.~Volkov\,\orcidlink{0000-0002-9327-243X}}

\vspace{3pt}
\scriptsize
\textit{Institute of Natural Sciences and Mathematics,} \\
\textit{Ural Federal University,} \\
\texttt{m.v.volkov@urfu.ru}

\end{flushright}

\vspace{4mm}

\small

\begin{quote}
\textbf{Abstract.} We exhibit new conditions under which a primitive automaton is synchronizing. In particular, we show that the primitivity of an automaton forces its synchronizability whenever the automaton has either a letter of defect 1 or a word of rank 2.
\end{quote}

\renewcommand{\thefootnote}{\arabic{footnote})}
\setcounter{footnote}{0}

\normalsize

\vspace{2mm}

\section{Background and motivation}
\label{sec:intro}

In this note, a \emph{complete deterministic finite automaton} (DFA) is a pair $\mathrsfs{A}=\langle Q,\Sigma\rangle$ of finite non-empty sets equipped with a map $Q\times\Sigma\to Q$ called the \emph{transition function} of $\mA$. The elements of $Q$ and $\Sigma$ are called \emph{states} and, resp., \emph{letters}. The image of a pair $(q,a)\in Q\times\Sigma$ under the transition function is denoted by $q\dt a$.

\emph{Words over $\Sigma$} are finite sequences of letters (including the empty sequence denoted by $\varepsilon$). The set of all words over $\Sigma$ is denoted by $\Sigma^*$. The transition function of $\mA$ extends to a function $Q\times\Sigma^*\to Q$ (denoted in the same way) by recursion: for every $q\in Q$, we set $q\dt\varepsilon:=q$ and $q\dt wa:=(q\dt w)\dt a$ for all $w\in\Sigma^*$ and $a\in\Sigma$. Every word $w\in\Sigma^*$ induces the transformation $q\mapsto q\dt w$ on the set $Q$. For any non-empty subset $P\subseteq Q$, let $P\dt w:=\{p\dt w\mid p\in P\}$ stand for the image of $P$ under this transformation.

A DFA $\mA=\langle Q,\Sigma\rangle$ is called \emph{synchronizing} if it possesses a \emph{reset} word, that is, a word $w\in\Sigma^*$ such that $q\dt w=q'\dt w$ for all $q,q'\in Q$. The minimum length of reset words of $\mA$ is called the \emph{reset threshold}. Synchronizing automata serve as transparent and productive models of error-tolerant systems in many applications; besides, they appear surprisingly in several branches of pure mathematics. We refer the reader to the chapter~\cite{KV} of the `Handbook of Automata Theory' and the author's recent survey \cite{Vo22} for a quick introduction to the area and an overview of its state-of-the art.

We aim to study the relationship between synchronizability and a property called primitivity. To define the latter, recall that a \emph{congruence} on a DFA $\mA=\langle Q,\Sigma\rangle$ is an equivalence $\rho\subseteq Q\times Q$ such that $(q,q')\in\rho$ implies $(q\dt a,q'\dt a)\in\rho$ for all $q,q'\in Q$ and $a\in\Sigma$. A DFA $\mA$ is said to be \emph{primitive}\footnote{An alternative name for this concept that sometimes appears in the literature is \emph{simple}.} if the equality and the universal relation on its state set are the only congruences on $\mA$. Primitive DFAs naturally came into consideration in algebraic automata theory; see, e.g., \cite{Th70}, and in studying automata via their linear representations; see, e.g., \cite{St10,Ry15,AR16}. They also play a role in applications in which DFAs with some state designated as an initial state and some non-empty set of states designated as final states are used as language recognizers; see, e.g., \cite{RV12}, especially, Proposition~1 therein, which the authors attribute to Zoltan \'Esik.

In general, primitivity and synchronizability are independent properties: there exist primitive DFAs that are not synchronizing and \sa\ that are not primitive. However, the conjunction of these properties defines an affluent class that contains many exciting species of DFAs. Say, several series of \sa\ with reset threshold close to the state number squared are observed to be primitive in \cite{AR16}; these include the famous series of DFAs with $n$ states and reset threshold $(n-1)^2$ discovered by Jan \v{C}ern\'{y} \cite{Ce64}. The literature contains many results that give or can be interpreted as conditions under which primitive DFAs become synchronizing; see, e.g., \cite{AS06,Ne09,ABC13,AC14,ABCea16,RS23}. The present note provides two new such conditions that strengthen some known results and allow us to confirm a conjecture proposed in~\cite{RS23}.

To describe our contribution in more detail, we need some extra notions. Given a DFA $\mA=\langle Q,\Sigma\rangle$, the \emph{rank} of $w$ is the cardinality of $Q\dt w$ and the \emph{defect} of $w$ is the cardinality of the set difference $Q\setminus Q\dt w$. Denote by $\Sigma_0$ the set of all letters of defect 0 in $\Sigma$. Assuming $\Sigma_0\ne\varnothing$, one can consider the DFA $\mA_0:=\langle Q,\Sigma_0\rangle$; observe that all letters in $\Sigma_0$ act on $Q$ as permutations. We call a DFA $\mA$ \emph{permutation-primitive} if the DFA $\mA_0$ is primitive.

So far, most studies on the synchronizability of primitive DFAs have actually dealt with permutation-primitive DFAs; see, e.g., the already cited papers \cite{AS06,Ne09,ABC13,AC14,ABCea16}. Considering permutation-primitive DFAs is very natural from the viewpoint of the theory of permutation groups, and work in this direction has revealed several deep ties with the theory of classical combinatorial configurations (such as Latin squares, Steiner systems, Hadamard matrices, and others); see the survey~\cite{ACS17}. From the present note's perspective, however, the permutation-primitivity may look like an ad hoc condition since there are plenty of primitive DFAs that are not permutation-primitive (for instance, the DFAs with a composite number of states from the \v{C}ern\'{y} series~\cite{Ce64} are such), and, moreover, there exist primitive DFAs without letters of defect 0. Still, it is worthwhile to analyze which synchronizability results established for permutation-primitive DFAs extend to general primitive DFAs.

For instance, Peter Neumann \cite[Lemma 2.4]{Ne09} observed that a permutation-primitive DFA with at least three states is synchronizing whenever it has a letter of rank 2. In Section~\ref{sec:rank2}, we show that so is any primitive DFA with at least three states and a letter of rank 2. It is also known that permutation-primitive DFAs possessing a letter of defect 1 are synchronizing; see, e.g., \cite[Theorem 1]{AC14}\footnote{In the literature, it is common to attribute this result to Igor Rystsov with references to either \cite{Ry95} or \cite{Ry00}. These valuable papers study some kinds of \sa\ but do not seem to deal with any form of primitivity.}. In Section~\ref{sec:main}, we prove that the presence of a letter of defect 1 ensures synchronizability for an arbitrary primitive DFA as was conjectured in~\cite{RS23}. The latter fact is a consequence of a general result (Theorem~\ref{thm:unimodal}) having several other applications.

We have made a fair effort to make this note self-contained to a reasonable extent. In particular, it should be understandable without any familiarity with the theory of permutation groups.

\section{Primitive automata with a unimodal letter}
\label{sec:main}

The \emph{graph} of a DFA $\mathrsfs{A}=\langle Q,\Sigma\rangle$ is the labeled directed graph that has $Q$ as the vertex set and the edge from $q$ to $q'$ labeled $a$ for every $q,q'\in Q$ and $a\in\Sigma$ such that $q\dt a=q'$. Fix a letter $a\in\Sigma$ and remove all edges of the graph of $\mA$ except for those labeled $a$. The remaining graph is denoted $\Gamma_a$, and its weakly connected components are called the $a$-\emph{clusters}.

Observe that $\Gamma_a$ has exactly one outgoing edge for every state in $Q$. Take a state $q\in Q$ and consider the path in $\Gamma_a$ starting at $q$:
\[
q\xrightarrow{a} q\dt a\xrightarrow{a}q\dt a^2\dots\xrightarrow{a} q\dt a^k\cdots.
\]
Since $Q$ is finite, states in this path eventually begin repeating, that is, for some non-negative integer $\ell$ and some integer $m>\ell$, we have $q\dt a^\ell=q\dt a^m$. (Here and below, we adopt the convention that $a^0=\varepsilon$.) In other words, each path in $\Gamma_a$ eventually arrives at a cycle. Hence, each $a$-cluster contains a unique cycle (that can degenerate into a loop) and, perhaps, some trees attached to this cycle at their roots. The least non-negative $\ell$ such that $q\dt a^\ell=q\dt a^m$ for some $m>\ell$ is called the $a$-\emph{height} of $q$. If the $a$-height $\ell$ of $q$ is positive, then the state $q\dt a^{\ell}$ is the root of the tree of $\Gamma_a$ containing $q$ and is called the \emph{root} of $q$. The illustration in Fig.~\ref{fig:cluster} shows an $a$-cluster whose states are marked with their $a$-heights. States connected with dashed lines belong to the same class of the equivalence
\[
\ker a:=\{(q,q')\in Q\times Q\mid q\dt a=q'\dt a\}.
\]

\begin{figure}[h]
\begin{center}
\unitlength=.85mm
\begin{picture}(100,52)(-10,-25)\nullfont
\node(A)(-20,20){2} \node(B)(0,20){1} \node(C)(0,1){2} \node(D)(0,-18){1} \node(E)(40,-18){1}
\node(Y)(100,20){3}
\node(AD)(20,1){0} \node(AB)(20,20){0} \node(BC)(40,20){0}
\node(CD)(40,1){0} \node(AC)(60,20){1} \node(BD)(80,20){2}
\node(AE)(60,1){2} \node(BE)(90,1){3}  \node(X)(80,-18){3}
\drawedge[ELside=r](CD,AD){$a$}
\drawedge(AD,AB){$a$} \drawedge(AB,BC){$a$} \drawedge(BC,CD){$a$}
\drawedge(A,B){$a$}
\drawedge(C,B){$a$}
\drawedge(D,AD){$a$}
\drawedge(E,AD){$a$}
\drawedge(B,AB){$a$}
\drawedge(AC,BC){$a$}
\drawedge(BD,AC){$a$}
\drawedge(AE,AC){$a$}
\drawedge(BE,AE){$a$}
\drawedge(Y,BD){$a$}
\drawedge(X,AE){$a$}
\gasset{AHnb=0}
\drawedge[dash={1.05}0](A,C){}
\drawedge[dash={1.05}0](D,E){}
\drawedge[dash={1.05}0](E,CD){}
\drawedge[dash={1.05}0](X,BE){}
\drawedge[dash={1.05}0](D,CD){}
\drawedge[dash={1.05}0](B,AD){}
\drawedge[dash={1.05}0](BD,AE){}
\drawedge[dash={1.05}0,curvedepth=10](AB,AC){}
\end{picture}
\end{center}
\caption{A sample $a$-cluster. Marks are the $a$-heights of states; dashed lines show the equivalence $\ker a$.}\label{fig:cluster}
\end{figure}

We say that $a$ is a \emph{unimodal} letter if the defect of $a$ is positive and all states with maximal $a$-height have the same root. In the $a$-cluster shown in Fig.~\ref{fig:cluster}, the maximal $a$-height of a state is 3, and all states of this $a$-height  have the same root. If we assume that the $a$-heights of states in other $a$-clusters of $\Gamma_a$ are all less than 3, then the letter $a$ is unimodal.

Our first main result is the following.

\begin{thm}
\label{thm:unimodal}
Every primitive DFA possessing a unimodal letter is synchronizing.
\end{thm}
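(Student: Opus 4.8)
The plan is to argue by contradiction. Suppose $\mA=\langle Q,\Sigma\rangle$ is primitive, $a\in\Sigma$ is unimodal, and yet $\mA$ is not synchronizing; I aim to produce a nontrivial congruence on $\mA$, contradicting primitivity. First I would bring in the standard machinery of minimal images. Let $m$ be the least value of $|Q\dt w|$ over all $w\in\Sigma^*$; since $\mA$ is not synchronizing, $m\ge 2$. Call a set of the form $Q\dt w$ with $|Q\dt w|=m$ a \emph{minimal image}, and let $\mathcal M$ be the family of all minimal images. Because $Q\dt wu=(Q\dt w)\dt u$ and $m$ is minimal, every word carries a minimal image onto a minimal image of the same cardinality; hence $\mathcal M$ is closed under the transition action, and every $w\in\Sigma^*$ acts injectively on every member of $\mathcal M$. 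From the unimodal letter $a$ I would record: the maximal $a$-height $h$ is positive (the defect of $a$ is positive); the set $C:=Q\dt a^h$ is exactly the set of states of $a$-height $0$, and $|C|\ge 2$, for otherwise $a^h$ would be a reset word; and all states of maximal $a$-height are mapped by $a^h$ to one and the same state $r\in C$ (the common root). Since $a^h$ is injective on each minimal image, any minimal image meets the $a^h$-preimage of $r$ in at most one state.

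The second step is to show $\bigcup\mathcal M=Q$. The set $\bigcup\mathcal M$ is closed under the transition action, being a union of image-closed sets, and it has at least $m\ge 2$ elements. But a primitive DFA has no proper closed subset of size $\ge 2$: if $P$ were such a subset, the partition of $Q$ into the block $P$ and the singletons $\{q\}$ with $q\notin P$ would be a congruence, nontrivial because $|P|\ge 2$ and $P\ne Q$. Hence $\bigcup\mathcal M$ cannot be a proper subset of $Q$, so $\bigcup\mathcal M=Q$. In particular, some minimal image $T$ contains a state $z$ of maximal $a$-height; then $T\dt a^{h}$ is a minimal image containing $r$, while $T\dt a^{h-1}$ is a minimal image containing the state $z\dt a^{h-1}$, which has $a$-height $1$ and root $r$, and therefore $T\dt a^{h-1}$ avoids the state $r^{-}$ of $a$-height $0$ with $r^{-}\dt a=r$.

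The third step, which I expect to be the real obstacle, is to convert the collapsing forced by the unimodal letter into the desired contradiction. The pair $\{z\dt a^{h-1},\,r^{-}\}$ consists of two distinct states that $a$ identifies (both are sent to $r$); being nontrivial, it generates the universal congruence, so the "orbit graph" on $Q$ whose edges are the pairs $\{z\dt a^{h-1}\dt w,\ r^{-}\dt w\}$ (over $w\in\Sigma^*$, kept only when the two entries differ) is connected. The aim is to exploit this connectedness together with the facts that every minimal image is a partial transversal of each $\ker w$ and that $\bigcup\mathcal M=Q$ in order to locate two states inside one minimal image that some power of $a$ identifies — impossible, since words act injectively on minimal images — or else to read off a nontrivial congruence directly. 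This is exactly where the combinatorics of the $a$-heights and of the distinguished root $r$ has to be marshalled correctly: it is the analogue, for a general unimodal letter, of what happens with a letter of rank $2$, where the rank-$2$ letter properly two-colours the co-occurrence graph of the minimal images and the resulting bipartition is a nontrivial congruence; for a unimodal letter the rôle of the colouring is played by the $a$-cluster structure around $r$, and getting this bookkeeping exactly right is the delicate part of the proof.
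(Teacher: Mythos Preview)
Your first two steps are sound: $\mathcal M$ is closed under the action, words act injectively on each member, and primitivity forces $\bigcup\mathcal M=Q$. But Step~3 is not a proof, only a programme. You exhibit a pair $\{z\dt a^{h-1},r^{-}\}$ that $a$ collapses and note that, by primitivity, the congruence it generates is universal (your ``orbit graph'' is connected). From there you merely \emph{hope} that connectedness, together with the minimal-image structure, forces two states of a single minimal image to be identified by some word. No mechanism is supplied: connectedness only says that any two states are linked by a chain of translated $a$-collapsible pairs, and nothing prevents every link of such a chain from straddling two different minimal images. Crucially, your chosen pair is collapsed by $a$ but there is no reason it should be \emph{stable}: for an arbitrary prefix $v$ you have not produced a word $w$ with $z\dt a^{h-1}\dt vw=r^{-}\dt vw$, and without that the pair tells you nothing about congruences beyond what any $a$-collapsed pair would.

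The paper avoids this difficulty by working with a different congruence, the stability relation $\sigma=\{(q,q')\mid\forall v\,\exists w\ q\dt vw=q'\dt vw\}$, which is universal exactly when the DFA is synchronizing. After an easy reduction to the strongly connected case, it invokes the key lemma from Trahtman's proof of the Road Colouring Conjecture: in a strongly connected DFA with a unimodal letter, $\sigma$ strictly contains the equality. Primitivity then makes $\sigma$ universal in one line. The ``delicate bookkeeping'' you defer is precisely the content of Trahtman's lemma --- producing a \emph{stable} pair, not merely an $a$-collapsible one --- and Trahtman achieves this by a careful extremal choice rather than by taking the first collapsed pair to hand. Until you either reproduce that argument or supply a genuine substitute, the proposal has a gap at its decisive step.
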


We prove Theorem~\ref{thm:unimodal} in Section~\ref{sec:proof}. Here we demonstrate some of its applications.

Given a DFA $\mA=\langle Q,\Sigma\rangle$, the \emph{kernel type} of a letter $a\in\Sigma$ is the non-increasing sequence of the class sizes of the equivalence $\ker a$. For an example, look again at the $a$-cluster shown in Fig.~\ref{fig:cluster}. If we assume that all other $a$-clusters of $\Gamma_a$ are cycles, then the letter $a$ has the kernel type $(3,2,2,2,2,2,1,1,\dots)$.

\begin{lem}\label{lem:kernel type}
In an arbitrary DFA, every letter of kernel type $(k,1,1,\dots)$, where $k>1$, is unimodal.
\end{lem}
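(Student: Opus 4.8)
The plan is to recast the hypothesis as a statement about the functional graph $\Gamma_a$ and then pin down all the roots. First I would record the easy half: a letter of kernel type $(k,1,1,\dots)$ satisfies $|Q\dt a|=|Q|-(k-1)$, so its defect equals $k-1$, which is positive because $k>1$; this settles the first requirement in the definition of a unimodal letter. Next, I would note that for any state $v$ the number of edges of $\Gamma_a$ entering $v$ is the size of the $\ker a$-class $\{q\in Q\mid q\dt a=v\}$ (and is $0$ when $v\notin Q\dt a$). Hence the kernel type $(k,1,1,\dots)$ says exactly that $\Gamma_a$ has a unique state, call it $t$, with more than one incoming edge, and it has $k$ of them.

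The core of the argument is the claim that every state of positive $a$-height has root $t$. Given such a state $q$, let $\ell\ge 1$ be its $a$-height, so that $r:=q\dt a^{\ell}$ is the root of $q$. Since $r$ lies on the cycle of its $a$-cluster, it receives an edge from the preceding state $r'$ on that cycle; it also receives an edge from $q\dt a^{\ell-1}$. These two states are distinct: $r'$ lies on a cycle, whereas $q\dt a^{\ell-1}$ does not, for otherwise $q\dt a^{\ell-1}$ would be periodic under $a$, making $\ell-1$ a witness against the minimality of $\ell$. Thus $r$ has at least two incoming edges, forcing $r=t$.

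Finally, positivity of the defect gives a state outside $Q\dt a$; such a state has no incoming edge in $\Gamma_a$, so it lies on no cycle, so its $a$-height is positive. Hence the maximal $a$-height is positive, and by the claim all states attaining it share the root $t$; together with the positive defect noted at the outset, this is precisely unimodality. I do not expect a real obstacle here; the only point needing care is the verification that the predecessor $q\dt a^{\ell-1}$ of the root genuinely lies off every cycle, which is exactly where the minimality clause in the definition of $a$-height is used.
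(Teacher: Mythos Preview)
Your proof is correct and follows essentially the same route as the paper: both arguments show that any root receives at least two incoming edges in $\Gamma_a$ (one from the cycle predecessor, one from the tree predecessor), so the kernel type $(k,1,1,\dots)$ forces a unique root, whence all states of positive $a$-height---in particular those of maximal $a$-height---share that root. Your write-up is slightly more explicit than the paper's (you compute the defect, verify that $q\dt a^{\ell-1}$ is off-cycle via minimality of $\ell$, and check that the maximal $a$-height is indeed positive), but the underlying idea is identical.
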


\begin{proof}
Consider a DFA $\langle Q,\Sigma\rangle$ and fix a letter $a\in\Sigma$. For each state $p\in Q\dt a$, the set $p\dt a^{-1}:=\{q\in Q\mid q\dt a= p\}$ constitutes a class of the equivalence $\ker a$. Now suppose that the defect of $a$ is positive and a state $r\in Q$ is the root of some state $s\in Q$ in the graph $\Gamma_a$. Then $r=q\dt a=q'\dt a$ where $q$ is the predecessor of $r$ in the path that leads from $s$ to $r$ in $\Gamma_a$ and $q'$ is the predecessor of $r$ in the cycle of $\Gamma_a$ on which $r$ lies ($q'=r$ if the cycle degenerates into a loop). Since $q\ne q'$, the class $r\dt a^{-1}$ is non-singleton. We see that each root contributes an entry different from 1 to the kernel type of the letter $a$.

Thus, if the kernel type of $a$ has a unique entry different from 1, then the graph $\Gamma_a$ has only one root whence all states with positive $a$-heights have the same root. In particular, the unimodality condition holds. \end{proof}

Combining Lemma~\ref{lem:kernel type} and Theorem~\ref{thm:unimodal} immediately yields the following.

\begin{cor}\label{cor:kernel type}
Every primitive DFA possessing a letter of kernel type $(k,1,1,\dots)$, where $k>1$, is synchronizing.
\end{cor}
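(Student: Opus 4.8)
The plan is to obtain this statement as a direct consequence of the two preceding results, exactly as indicated just before its formulation.

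First I would verify that a letter $a$ of kernel type $(k,1,1,\dots)$ with $k>1$ indeed has positive defect, so that Lemma~\ref{lem:kernel type} applies to it. The rank of $a$ equals the number of classes of $\ker a$; a letter with the stated kernel type collapses $k$ states onto a single state and keeps every other fibre a singleton, so its rank is $|Q|-(k-1)$. Since $k>1$, this is strictly less than $|Q|$, whence the defect of $a$ equals $k-1\geq 1$. Thus $a$ has exactly one entry different from $1$ in its kernel type, and Lemma~\ref{lem:kernel type} tells us that $a$ is unimodal.

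Second, I would simply apply Theorem~\ref{thm:unimodal}: the DFA under consideration is primitive and has the unimodal letter $a$, hence it is synchronizing, and the corollary follows.

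The only real work lies upstream, in the proof of Theorem~\ref{thm:unimodal}; the present reduction is a one-line matching of hypotheses, the sole point deserving mention being the elementary computation that $k>1$ forces positive defect. One could alternatively try a self-contained argument that reproves the relevant case of Theorem~\ref{thm:unimodal} for the narrow class of letters whose kernel type is $(k,1,1,\dots)$ — such letters are injective off a single fibre, so their powers are especially easy to track — but this would merely duplicate work already carried out, and I would not take that route.
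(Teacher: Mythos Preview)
Your proposal is correct and follows exactly the paper's approach: the corollary is stated immediately after the sentence ``Combining Lemma~\ref{lem:kernel type} and Theorem~\ref{thm:unimodal} immediately yields the following,'' and you do precisely this combination. The extra verification that $k>1$ forces positive defect is harmless but already subsumed in the hypothesis and proof of Lemma~\ref{lem:kernel type}.
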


For permutation-primitive DFAs, synchronizability under the presence of a letter of kernel type $(k,1,1,\dots)$, $k>1$, was established by Jo\~ao Ara\'ujo and Peter Cameron~\cite{AC14}; see Theorem 2 therein. Corollary~\ref{cor:kernel type} generalizes this result.

In \cite{RS23}, Igor Rystsov and Marek Szyku\l{}a came up with two conjectures on the synchronizability of primitive DFAs. The first of them, restated in the terminology of the present note, is the following.
\begin{coni}
Every primitive DFA with all letters of defect at most $1$ is synchronizing unless all letters have defect $0$.
\end{coni}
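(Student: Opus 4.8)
The plan is to deduce the conjecture from Corollary~\ref{cor:kernel type}, so that all the substantive work is carried by the unimodality machinery already established.

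First I would extract a letter of defect exactly $1$ from the hypotheses. Since every letter of $\Sigma$ has defect at most $1$ but not every letter has defect $0$, there must be a letter $a\in\Sigma$ with $|Q\setminus Q\dt a|=1$. Next I would compute the kernel type of such a letter. Having defect $1$ means $|Q\dt a|=|Q|-1$, so the equivalence $\ker a$ partitions $Q$ into exactly $|Q|-1$ classes whose sizes add up to $|Q|$; this is possible only if one class has size $2$ and every other class is a singleton. Hence the kernel type of $a$ is $(2,1,1,\dots)$, which has the shape $(k,1,1,\dots)$ with $k=2>1$. Since the DFA is primitive by assumption, Corollary~\ref{cor:kernel type} applies directly and shows that the DFA is synchronizing.

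I do not expect any genuine obstacle at this level: the deduction is immediate once Lemma~\ref{lem:kernel type} and Theorem~\ref{thm:unimodal} are available, and the real content lies in the proof of Theorem~\ref{thm:unimodal}. Were one to attack the conjecture head-on, the crux would be precisely what that theorem resolves, namely converting the congruence-theoretic hypothesis of primitivity into a combinatorial statement about the $a$-clusters of $\Gamma_a$ strong enough to manufacture a reset word from the defect-$1$ letter.
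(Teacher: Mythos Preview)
Your proposal is correct and matches the paper's approach exactly: the paper also observes that a letter of defect $1$ has kernel type $(2,1,1,\dots)$ and then invokes Corollary~\ref{cor:kernel type}, recording the result as Corollary~\ref{cor:defect1}.
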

In \cite[Section 3.3]{RS23}, Conjecture RS was supported by some experimental data; in particular, it was reported to hold for all DFAs with two letters and at most 11 states. Besides, it was shown in \cite[Section 3.1]{RS23} that several known results implied the validity of Conjecture RS in certain classes of DFAs.

Since the kernel type of a letter of defect 1 is $(2,1,1,\dots)$, the following result (validating Conjecture RS) is a special case of Corollary~\ref{cor:kernel type}.
\begin{cor}\label{cor:defect1}
Every primitive DFA possessing a letter of defect $1$ is synchronizing.
\end{cor}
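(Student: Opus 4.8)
The plan is to recognize that Corollary~\ref{cor:defect1} is precisely the instance $k=2$ of Corollary~\ref{cor:kernel type}, so the only thing that needs to be checked is that a letter of defect $1$ has kernel type $(2,1,1,\dots)$. Everything of substance has already been done in Theorem~\ref{thm:unimodal} and Lemma~\ref{lem:kernel type}; this step is pure bookkeeping.

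Concretely, let $\mA=\langle Q,\Sigma\rangle$ and let $a\in\Sigma$ have defect $1$, so that $|Q\dt a|=|Q|-1$. The classes of $\ker a$ are exactly the non-empty fibres $p\dt a^{-1}=\{q\in Q\mid q\dt a=p\}$ for $p\in Q\dt a$, and there are $|Q\dt a|=|Q|-1$ of them, each of size at least $1$. Since their sizes sum to $|Q|$, the total ``excess'' $\sum_{p\in Q\dt a}(|p\dt a^{-1}|-1)$ equals $1$; hence exactly one fibre has size $2$ and the remaining $|Q|-2$ fibres are singletons. Thus the kernel type of $a$ is $(2,1,1,\dots)$ with $k=2>1$, and Corollary~\ref{cor:kernel type} applies: a primitive DFA with such a letter is synchronizing. (Equivalently, one may route the argument through the chain Lemma~\ref{lem:kernel type} $\Rightarrow$ $a$ is unimodal $\Rightarrow$ Theorem~\ref{thm:unimodal}; for a defect-$1$ letter the unimodality is even visible directly, since $\Gamma_a$ then has exactly one vertex receiving two $a$-labeled edges, so it has a unique tree root, and every state of maximal $a$-height trivially shares that root.)

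I do not expect any obstacle at this level: the defect-$1$ case is a clean corollary of the kernel-type version, which in turn rests on Lemma~\ref{lem:kernel type} and the main Theorem~\ref{thm:unimodal}. The genuinely hard part lives entirely in Theorem~\ref{thm:unimodal}, whose proof (deferred to Section~\ref{sec:proof}) is where primitivity must actually be exploited --- presumably by arguing that if a primitive DFA with a unimodal letter were not synchronizing, then some relation built from the action of the unimodal letter near its common root, together with the rest of the alphabet, would be a proper non-trivial congruence, contradicting primitivity. Once that is in place, the route down to Corollary~\ref{cor:defect1} (and hence to Conjecture RS) is immediate.
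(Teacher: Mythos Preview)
Your proposal is correct and follows exactly the paper's own route: the paper simply remarks that a letter of defect~$1$ has kernel type $(2,1,1,\dots)$ and invokes Corollary~\ref{cor:kernel type}, which is precisely what you do (with the harmless addition of spelling out the fibre-counting that makes the kernel type explicit). Your closing speculation about how Theorem~\ref{thm:unimodal} is proved is not quite how Section~\ref{sec:proof} proceeds---it uses the stability relation of~\cite{CKK02} together with a lemma from Trahtman's Road Coloring argument rather than building a congruence directly---but that is outside the scope of this corollary.
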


One may ask whether primitivity implies synchronizability also in the presence of a letter of defect 2. The answer is negative as shown in \cite{RS23}: Example 7 there provides two non-synchronizing primitive DFAs each of which has five states and a letter of defect 2. The graphs of these DFAs are shown in Fig.~\ref{fig:PrimitiveNonSynchro} borrowed from~\cite{RS23} with the authors' permission. 
\begin{figure}[htb]
\begin{tikzpicture}[node distance=2cm,scale=1,every node/.style={transform shape},bend angle=30]
\node[state, inner sep=0pt] (q0) {$q_0$};
\node[state, inner sep=0pt] [right=of q0] (q1) {$q_1$};
\node[state, inner sep=0pt] [right=of q1] (q2) {$q_2$};
\node[state, inner sep=0pt] [below=of q2] (q3) {$q_3$};
\node[state, inner sep=0pt] [left=of q3] (q4) {$q_4$};

\draw (q0) edge[bend left=20] node[auto,midway]{$a$} (q1);
\draw (q1) edge[bend left=20] node[auto,midway]{$a$} (q0);
\draw (q4) edge node[auto,midway]{$a$} (q0);
\draw (q2) edge[bend left] node[auto,midway]{$a$} (q3);
\draw (q3) edge[loop, looseness=7,out=-20,in=20] node[swap,midway]{$a$} (q3);

\draw(q0) edge[loop, looseness=7,in=200,out=160] node[swap,midway]{$b$} (q0);
\draw(q1) edge node[auto,midway]{$b$} (q2);
\draw(q2) edge node[auto,midway,swap]{$b$} (q3);
\draw(q3) edge node[auto,midway]{$b$} (q4);
\draw(q4) edge node[auto,midway]{$b$} (q1);
\end{tikzpicture}
\begin{tikzpicture}[node distance=2cm,scale=1,every node/.style={transform shape},bend angle=10]
\node[state, inner sep=0pt] (q0) {$q_0$};
\node[state, inner sep=0pt] [above left=of q0] (q1) {$q_1$};
\node[state, inner sep=0pt] [above right=of q0] (q2) {$q_2$};
\node[state, inner sep=0pt] [below right=of q0] (q3) {$q_3$};
\node[state, inner sep=0pt] [below left=of q0] (q4) {$q_4$};
\draw(q0) edge node[auto,midway]{$a$} (q3);
\draw(q2) edge node[auto,midway]{$a$} (q3);
\draw(q3) edge node[auto,midway]{$a$} (q4);
\draw(q4) edge node[auto,midway]{$a$} (q0);
\draw(q1) edge node[auto,midway]{$a$} (q0);

\draw(q1) edge[bend left] node[auto,midway]{$b$} (q4);
\draw(q4) edge[bend left] node[auto,midway]{$b$} (q1);
\draw(q0) edge[bend left] node[auto,midway]{$b$} (q2);
\draw(q2) edge[bend left] node[auto,midway]{$b$} (q0);
\draw(q3) edge[loop,out=-20,in=20,looseness=6] node[midway,swap]{$b,c$} (q3);

\draw(q1) edge[bend left] node[auto,midway]{$c$} (q2);
\draw(q2) edge[bend left] node[auto,midway]{$c$} (q1);
\draw(q0) edge[loop,out=250,in=290,looseness=6] node[midway,swap]{$c$} (q0);
\draw(q4) edge[loop,out=160,in=200,looseness=6] node[midway,swap]{$c$} (q4);
\end{tikzpicture}
\caption{Two non-synchronizing primitive DFAs with a letter of defect 2 from \cite[Example 7]{RS23}.}\label{fig:PrimitiveNonSynchro}
\end{figure}

On the other hand, every permutation-primitive DFA with a letter of defect 2 is synchronizing~\cite[Theorem 3(a)]{AC14}.

Given a DFA $\langle Q,\Sigma\rangle$, a letter $a\in\Sigma$ of positive defect is called a \emph{semiconstant} if $q\dt a=q'\dt a$ for all $q,q'\notin Q\dt a$ and $p\dt a=p$ for all $p\in Q\dt a$. In other words, a semiconstant fixes every state in its image and sends all states outside the image to one particular state. By \cite[Theorem 25]{RS23}, every primitive DFA whose letters are either of defect 0 or semiconstants is synchronizing unless all letters have defect $0$. Obviously, the kernel type of a semiconstant of defect $d$ is $(d+1,1,1,\dots)$. Thus, Corollary~\ref{cor:kernel type} readily leads to a stronger fact.

\begin{cor}\label{cor:semiconstant}
Every primitive DFA possessing a semiconstant is synchronizing.
\end{cor}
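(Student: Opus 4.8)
The plan is to obtain Corollary~\ref{cor:semiconstant} as an immediate instance of Corollary~\ref{cor:kernel type}, the only work being to read off the kernel type of a semiconstant. First I would fix a primitive DFA $\langle Q,\Sigma\rangle$ together with a semiconstant $a\in\Sigma$ and recall that, by definition, the defect $d$ of $a$ is positive, so $d\ge 1$. Write $P:=Q\dt a$ for the image of $a$; then $|Q\setminus P|=d$, and by the definition of a semiconstant every state of $P$ is a fixed point of $a$ while all $d$ states of $Q\setminus P$ are mapped by $a$ to one common state $p_0\in P$.

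Next I would describe the classes of $\ker a$. For $p\in P$ the preimage $p\dt a^{-1}$ is a class of $\ker a$, and since $p\dt a=p$ we have $p\in p\dt a^{-1}$. The state $p_0$ additionally receives all of $Q\setminus P$, so $p_0\dt a^{-1}=\{p_0\}\cup(Q\setminus P)$ has size $d+1$, whereas $p\dt a^{-1}=\{p\}$ for every other $p\in P$. Hence the kernel type of $a$ is $(d+1,1,1,\dots)$, and $d+1>1$ because $d\ge 1$.

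Finally, Corollary~\ref{cor:kernel type} applies directly to the letter $a$ and yields that $\langle Q,\Sigma\rangle$ is synchronizing. I do not expect any genuine obstacle here: the whole argument is the kernel-type bookkeeping above, after which one simply invokes Corollary~\ref{cor:kernel type} (which itself rests on Lemma~\ref{lem:kernel type} and Theorem~\ref{thm:unimodal}). It is perhaps worth remarking, in closing, that this strengthens \cite[Theorem 25]{RS23} in that it requires only a single semiconstant rather than asking every letter to be of defect $0$ or a semiconstant.
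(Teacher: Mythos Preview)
Your proposal is correct and follows exactly the paper's own argument: the paper also observes that a semiconstant of defect $d$ has kernel type $(d+1,1,1,\dots)$ and then invokes Corollary~\ref{cor:kernel type}. You have simply spelled out the preimage bookkeeping that the paper leaves as ``obvious''.
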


\section{Proof of Theorem~\ref{thm:unimodal}}
\label{sec:proof}

We start with a reduction which was deduced in \cite[Section 3.1]{RS23} from a known property of primitive DFAs; see \cite[Proposition 8]{Th70} or \cite[Proposition 5.1]{St10}. Here, we provide a direct proof for the sake of self-containedness.

A DFA $\langle Q,\Sigma\rangle$ is called \emph{\scn} if for all $q,q'\in Q$ there exists a word $w\in\Sigma^*$ such that $q\dt w=q'$.

\begin{prop}
\label{prop:reduction}
Every primitive DFA with more than two states is either \scn\ or synchronizing.
\end{prop}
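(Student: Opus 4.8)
The plan is to prove the statement directly in the form: if a primitive DFA $\mA=\langle Q,\Sigma\rangle$ with $|Q|>2$ is not \scn, then it is synchronizing. The strategy is to locate an \emph{absorbing} state $z$ (one with $z\dt a=z$ for all $a\in\Sigma$) that is reachable from every state; once this is in hand, synchronizability is routine. Throughout I would call a non-empty $P\subseteq Q$ \emph{invariant} if $P\dt a\subseteq P$ for every $a\in\Sigma$, and write $R(q)=\{q\dt w\mid w\in\Sigma^*\}$ for the (invariant) set of states reachable from $q$.

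First I would produce an absorbing state. Since $\mA$ is not \scn, there are $q,q'$ with $q'\notin R(q)$, so $R(q)$ is a proper non-empty invariant set; by finiteness it contains a minimal non-empty invariant set $M$, still proper. The crucial observation is that the partition of $Q$ with $M$ as a single block and all remaining blocks singletons is a congruence on $\mA$ — invariance of $M$ is exactly what this needs — and it is not the universal relation because $M\ne Q$. Primitivity then forces it to be the equality relation, so $|M|=1$; writing $M=\{z\}$, invariance of $\{z\}$ says $z\dt a=z$ for every $a$, i.e.\ $z$ is absorbing.

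Next I would show $z$ is the unique absorbing state and is reachable from all of $Q$. Uniqueness uses the same congruence idea: two distinct absorbing states $z,z'$ would make the partition with single non-trivial block $\{z,z'\}$ a congruence, and it is not universal precisely because $|Q|>2$, contradicting primitivity. For reachability, put $T=\{q\in Q\mid z\in R(q)\}$ and note that $q\notin T$ forces $q\dt a\notin T$ for every $a$ (since $R(q\dt a)\subseteq R(q)$), so $Q\setminus T$ is invariant; were it non-empty it would contain a minimal non-empty invariant set, which by the previous paragraph is a singleton absorbing state and hence, by uniqueness, equals $\{z\}$ — impossible, as $z\in T$. Thus $T=Q$.

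Finally, with an absorbing state $z$ reachable from everywhere, the conclusion is immediate: given $p,q\in Q$, choose $u$ with $p\dt u=z$ and then $v$ with $(q\dt u)\dt v=z$; since $z\dt v=z$, the word $uv$ sends both $p$ and $q$ to $z$, so any two states are merged by some word, and taking $w$ minimizing $|Q\dt w|$ and merging two of its surviving states whenever $|Q\dt w|>1$ upgrades this to a reset word. The single point that needs real care — and the only spot where the hypothesis $|Q|>2$ is used — is the uniqueness of the absorbing state in the third paragraph; everything else is bookkeeping with invariant sets and the standard image-shrinking argument. (For $|Q|=2$ the claim genuinely fails, as the two-state DFA with two loops shows.)
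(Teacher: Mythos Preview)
Your proof is correct and follows essentially the same approach as the paper: both hinge on the observation that for any invariant subset $S$, the partition with $S$ as one block and singletons elsewhere is a congruence, and both use this to locate a unique absorbing state reachable from everywhere, then finish with the standard image-shrinking construction of a reset word. The paper's version is marginally more streamlined in that it first records the lemma ``every invariant subset is a singleton or all of $Q$'' and applies it directly to each $R(q)$, which makes reachability immediate (if $q\ne z$ then $R(q)=Q\ni z$) and avoids your separate $T$ argument via a second minimal-invariant-set step.
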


\begin{proof}
Let $\mA=\langle Q,\Sigma\rangle$ be a DFA. A non-empty subset $S\subseteq Q$ is said to be \emph{invariant} if $s\dt a\in S$ for all $s\in S$ and $a\in\Sigma$. Given an invariant subset $S$, consider the relation
\[
\rho_S:=\{(q,q')\in Q\times Q\mid q,q'\in S \text{ or } q=q'\}.
\]
It is known (and easy to verify) that $\rho_S$ is a congruence on $\mA$ for which $S$ is a class. Hence, in a primitive DFA, every invariant subset either is a singleton or coincides with the set of all states.

Assume that $\mA=\langle Q,\Sigma\rangle$ is a primitive DFA with more than two states. For each $q\in Q$, consider the set $\ov{q}:=\{q\dt w\mid w\in\Sigma^*\}$. Clearly, $\ov{q}$ is an invariant subset whence either $\ov{q}=\{q\}$ or $\ov{q}=Q$. If $\ov{q}=Q$ for all $q\in Q$, then $\mA$ is \scn. Suppose that there is a state $q_0$ such that $\ov{q_0}=\{q_0\}$. If $q_1$ is another state with $\ov{q_1}=\{q_1\}$, then $\{q_0,q_1\}$ is a 2-element invariant subset. We then must have $\{q_0,q_1\}=Q$, a contradiction. Hence, $\ov{q}=Q$ for all $q\in Q\setminus\{q_0\}$; in particular, for each $q\in Q\setminus\{q_0\}$, there exists a word $w_q\in\Sigma^*$ such that $q\dt w_q=q_0$.

We inductively construct a reset word for $\mA$, starting with $w_0:=\varepsilon$. If a word $w_i$ has already been constructed and $Q\dt w_i=\{q_0\}$, then $w_i$ is a reset word. Otherwise, take any $q\in Q\dt w_i\setminus\{q_0\}$ and let $w_{i+1}:=w_iw_q$. Since $q\dt w_q=q_0\dt w_q=q_0$, the cardinality of $Q\dt w_{i+1}$ is strictly less than that of $Q\dt w_i$. Hence, the described process eventually produces a reset word, and $\mA$ is synchronizing.
\end{proof}

We need a notion which is due to Karel Culik II, Juhani Karhum\"aki, and Jarkko Kari~\cite{CKK02}. They defined the \emph{stability relation} $\sigma$ on a DFA $\langle Q,\Sigma\rangle$ as follows:
\[
\sigma:=\{(q,q')\in Q\times Q\mid \forall v\in \Sigma^*\ \exists w\in \Sigma^*\ \ q\dt vw=q'\dt vw\}.
\]
The following properties were observed in~\cite{CKK02}.
\begin{lem}
\label{lem:stability}
\begin{enumerate}
  \item On each DFA, the stability relation is a congruence.
  \item A DFA is synchronizing if an only if its stability relation is universal.
\end{enumerate}
\end{lem}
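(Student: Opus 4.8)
The plan is to prove both assertions directly from the definition of the stability relation $\sigma$, using only elementary word manipulations together with the finiteness of $Q$. For part (1), I would first verify that $\sigma$ is an equivalence: reflexivity is witnessed by $w=\varepsilon$, and symmetry is clear since the defining condition is symmetric in $q$ and $q'$. For transitivity, suppose $(q,q')\in\sigma$ and $(q',q'')\in\sigma$, and fix $v\in\Sigma^*$; using $(q,q')\in\sigma$ I obtain $w_1$ with $q\dt vw_1=q'\dt vw_1$, and then, applying the defining condition for $(q',q'')\in\sigma$ to the word $vw_1$, I obtain $w_2$ with $q'\dt vw_1w_2=q''\dt vw_1w_2$; the word $w:=w_1w_2$ then satisfies $q\dt vw=q''\dt vw$, as required. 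That $\sigma$ is a congruence follows by the same idea: given $(q,q')\in\sigma$, a letter $a\in\Sigma$, and $v\in\Sigma^*$, I apply the defining condition for $(q,q')$ to the word $av$, obtaining $w$ with $q\dt avw=q'\dt avw$, that is, $(q\dt a)\dt vw=(q'\dt a)\dt vw$; hence $(q\dt a,q'\dt a)\in\sigma$.

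For part (2), the implication from synchronizability to universality of $\sigma$ is immediate: if $r$ is a reset word of $\mA$, then for any $q,q'\in Q$ and any $v\in\Sigma^*$ the word $r$ itself satisfies $q\dt vr=q'\dt vr$, since $r$ collapses all of $Q$ and in particular the states $q\dt v$ and $q'\dt v$; thus every pair of states lies in $\sigma$. For the converse, assume $\sigma$ is universal. Specializing the definition to $v=\varepsilon$, universality says precisely that for every pair $q,q'\in Q$ there is a word $w$ with $q\dt w=q'\dt w$. I then run the standard image-shrinking procedure: maintain a word $u$, initially $\varepsilon$; while $Q\dt u$ contains two distinct states $p,p'$, choose $w$ with $p\dt w=p'\dt w$ and replace $u$ by $uw$. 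Since $w$ identifies the distinct states $p,p'$ of $Q\dt u$, the transformation it induces is non-injective on $Q\dt u$, so $|Q\dt uw|<|Q\dt u|$; as $Q$ is finite, after at most $|Q|-1$ steps $u$ becomes a reset word.

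The only step calling for any real care is the converse direction in part (2), whose content is merely the classical observation that a finite DFA in which every pair of states can be merged by some word is synchronizing; the termination of the procedure is a one-line induction on $|Q\dt u|$. Accordingly, I anticipate no serious obstacle.
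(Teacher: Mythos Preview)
Your proof is correct. The paper itself does not give a proof of this lemma; it simply attributes the result to Culik, Karhum\"aki, and Kari~\cite{CKK02} and moves on. Your argument is the standard elementary verification: reflexivity, symmetry, transitivity, and compatibility with letters are checked directly from the definition, and part~(2) combines the trivial forward implication with the well-known image-shrinking procedure. There is nothing to compare against, and your write-up would serve perfectly well as a self-contained replacement for the citation.
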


The final ingredient we need stems from Avraham Trahtman's proof of the Road Coloring Conjecture in \cite{Tr09}. We use the presentation of Trahtman's argument given in \cite{KV}; the following is Lemma 4.4 from~\cite{KV} restated in the terminology adopted in the present note.

\begin{lem}
\label{lem:unimodal}
In a \scn\ DFA possessing a unimodal letter, the stability relation is not the equality.
\end{lem}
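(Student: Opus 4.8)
The plan is to exploit the tree structure attached to the cycle in the unimodal letter's cluster, working our way down from the states of maximal $a$-height. Let $a$ be the unimodal letter, let $\ell$ be the maximal $a$-height of a state, and let $r$ be the common root shared by all states of height $\ell$. First I would observe that $\ell\ge 1$: if $\ell=0$, then $a$ acts as a permutation on $Q$, contradicting that $a$ has positive defect. Now pick two distinct states $q_1,q_2$ of height $\ell$; they exist because the class of $\ker a$ containing $r$'s in-neighbours is non-singleton (this is essentially the argument in Lemma~\ref{lem:kernel type}: $r$ being a root forces $|r\dt a^{-1}|>1$, and among the preimages of $r$ at least one lies on a tree and hence has height $\ell$, but we may need a short argument to find \emph{two} height-$\ell$ states, or else to pair a height-$\ell$ state with its sibling). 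The aim is to show $(q_1,q_2)\in\sigma$, i.e.\ that $\sigma$ is not the equality.

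The key step is then: for every word $v\in\Sigma^*$, I must produce a word $w$ with $q_1\dt vw=q_2\dt vw$. The idea is that strong connectivity lets us route $q_1\dt v$ and $q_2\dt v$ back to a controlled situation, and then a suitable power of $a$ collapses them. More precisely, I would argue by showing that any two states can be driven, using first some word and then a long power of $a$, into a single state, \emph{provided} we start from a pair that is already "collapsible". The cleanest route is probably to prove the contrapositive of the desired conclusion is absurd, but the direct route is: since $q_1$ and $q_2$ have the same root $r$, we have $q_1\dt a^\ell=q_2\dt a^\ell=r$, so with $v=\varepsilon$ the word $w=a^\ell$ works. For general $v$, let $p_i:=q_i\dt v$. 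By strong connectivity there is a word $u$ with $r\dt u=q_1$ (say), but that alone does not obviously bring $p_1,p_2$ together. Instead I expect the right move is: iterate. Using strong connectivity, find $u$ with $p_1\dt u$ of maximal height in its cluster; but the heights of $p_1\dt u$ and $p_2\dt u$ need not match.

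So the honest plan is to follow Trahtman's original idea as presented in~\cite{KV}: consider, among all pairs $(q,q')$ with $q\ne q'$, those that are "stable-candidates", and use a minimality/extremality argument on $a$-heights. Concretely: take a pair of distinct states of maximal possible $a$-height $\ell$ with a common root, as above. For a given $v$, let $p_1=q_1\dt v,\ p_2=q_2\dt v$, and choose (by strong connectivity) a word $u$ such that $p_1\dt u$ has $a$-height $\ell$ in $\Gamma_a$ — this is possible because some state of height $\ell$ exists and is reachable from $p_1$. Then $p_1\dt u$ has root $r$, so $p_1\dt ua^\ell=r$. The difficulty is that $p_2\dt u$ need not have height $\ell$, hence $p_2\dt ua^\ell$ need not equal $r$. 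To fix this, I would first apply a power $a^{N}$ with $N$ large enough that both $p_1\dt u$ and $p_2\dt u$ land on their respective cycles, then use strong connectivity again to reach a height-$\ell$ state simultaneously — and here the main obstacle, and the place where unimodality is genuinely used, is ensuring that the two states can be made to \emph{coincide} rather than merely both reach the cycle. The resolution: once both $p_1\dt u a^N$ and $p_2\dt u a^N$ lie on cycles of $\Gamma_a$, pick a word $t$ sending $p_1\dt ua^N$ to a state of height exactly $\ell$; then $t$ also sends $p_2\dt ua^N$ somewhere, and $a^\ell$ sends the first image to $r$; if the second image is not $r$, repeat the whole procedure, strictly increasing (in an appropriate well-founded sense, e.g.\ the $a$-height of one of the tracked states, or using that the tree has finite depth) until the two tracked states collapse — the unimodality guarantees there is a \emph{unique} root that "absorbs" the maximal-height states, so the collapsing target is forced and the iteration terminates. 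The main obstacle is making this termination argument precise and checking it really only needs unimodality (not a stronger hypothesis); I would model it closely on Lemma~4.4 of~\cite{KV}, which is exactly this statement, so in the write-up I would either reproduce that argument or cite it directly, as the excerpt already does.
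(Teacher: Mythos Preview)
The paper does not give its own proof of this lemma: it explicitly imports it as Lemma~4.4 of \cite{KV} and states it without argument. So your closing suggestion---to cite \cite{KV} directly---is precisely what the paper does.

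Your attempt at a self-contained proof, however, does not go through as written. Two concrete gaps:
\begin{itemize}
\item You assume you can pick \emph{two} distinct states $q_1,q_2$ of maximal $a$-height $\ell$. Unimodality does not guarantee this; there can be a single state of height $\ell$ (e.g., a lone chain of length $\ell$ hanging off a cycle). Your parenthetical ``or else pair a height-$\ell$ state with its sibling'' is not a fix, since that sibling need not exist either.
\item More seriously, the heart of the matter---showing that for \emph{every} $v$ one can find $w$ with $q_1\dt vw=q_2\dt vw$---is where the real work lies, and your iteration sketch does not supply it. After applying $v$, the images $p_1,p_2$ are arbitrary; routing $p_1$ to a height-$\ell$ state via strong connectivity tells you nothing about where $p_2$ lands, and your proposed ``repeat until collapse'' loop has no well-founded measure that actually decreases. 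Unimodality by itself does not make this iteration terminate; Trahtman's argument uses a different extremal choice (working with deadlock pairs and analyzing how the action of $a^\ell$ interacts with the unique highest tree) rather than trying to certify stability of one fixed pair directly.
\end{itemize}
So the direct argument, as sketched, has a genuine missing idea. If you want a self-contained proof, you would need to reproduce the deadlock/extremal-pair analysis from \cite{Tr09} or \cite{KV}; otherwise, citing \cite{KV} is exactly the paper's approach.
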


\begin{proof}[Proof of Theorem~\ref{thm:unimodal}]
Let $\mA$ be a primitive DFA possessing a unimodal letter; we aim to show that $\mA$ is synchronizing. If $\mA$ has at most two states, then the unimodal letter is easily seen to be a reset word for $\mA$. Otherwise, in view of Proposition~\ref{prop:reduction}, we may assume that $\mA$ is \scn. Then Lemma~\ref{lem:unimodal} ensures that the stability relation $\sigma$ on $\mA$ is not the equality. Since $\mA$ is primitive, Lemma~\ref{lem:stability}(1) implies that $\sigma$ is universal whence $\mA$ is synchronizing by Lemma~\ref{lem:stability}(2).
\end{proof}

\section{Primitive automata with a word of rank 2}
\label{sec:rank2}

Our second main result generalizes Peter Neumann's lemma on permutation-primitive DFAs  \cite[Lemma 2.4]{Ne09} that has already been mentioned in Section~\ref{sec:intro}.

\begin{thm}
\label{thm:rank2}
A primitive DFA $\mA=\langle Q,\Sigma\rangle$ with at least three states is synchronizing whenever some word $w\in\Sigma^*$ has rank $2$.
\end{thm}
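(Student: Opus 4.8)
The plan is to mimic the structure of the proof of Theorem~\ref{thm:unimodal}: reduce to the strongly connected case via Proposition~\ref{prop:reduction}, and then show that in a strongly connected DFA with a word of rank~$2$ the stability relation $\sigma$ cannot be the equality; primitivity together with Lemma~\ref{lem:stability} will then force $\sigma$ to be universal and hence $\mA$ synchronizing. So the whole burden falls on proving the analogue of Lemma~\ref{lem:unimodal}: \emph{in a strongly connected DFA in which some word $w$ has rank~$2$, the stability relation is not the equality.}

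To prove that, I would start from a word $w$ with $Q\dt w=\{p,q\}$, $p\ne q$. Strong connectivity lets us, for any target word $v\in\Sigma^*$, first apply $w$ (arriving in $\{p,q\}$) and then some word $u$ bringing either $p\dt v$ or one of the two states back so that we can re-enter the rank-$2$ set — the point being that once the whole of $Q$ is squeezed into a $2$-element set, any further action of a letter either keeps it at size~$2$ (the two states move to two distinct states) or collapses it to size~$1$. If at some point the image drops to size~$1$ we are done, since then $\mA$ is synchronizing and $\sigma$ is universal, a fortiori not the equality. So assume no word collapses $\{p,q\}$; then $w\Sigma^*$ acts on the two-element set $\{p,q\}$ by permutations, i.e. the monoid of transformations of $\{p,q\}$ induced by words preceded by $w$ is a group of order $1$ or $2$. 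I would consider the pair $(p,q)$ itself: given any $v\in\Sigma^*$, look at $\{p,q\}\dt v=\{p',q'\}$ (still two states, by assumption), and then use strong connectivity to find $u$ with $p'\dt u=q'\dt u$? — no: that would again collapse the pair, contradicting the assumption. The correct move is instead to track a \emph{different} pair: pick states that $w$ already identifies. Concretely, since $w$ has rank~$2<|Q|$, the kernel of $w$ is nontrivial, so there are $x\ne y$ with $x\dt w=y\dt w$; I claim $(x,y)\in\sigma$. Indeed, for any $v\in\Sigma^*$, strong connectivity gives a word $v'$ with $(x\dt v)\dt v'$ and $(y\dt v)\dt v'$ both lying in the domain from which $w$ maps them to the common value $x\dt w=y\dt w$ — more carefully, take $v'$ to be a word steering $x\dt v$ to $x$; then $(x\dt v)\dt v'w=x\dt w$ and $(y\dt v)\dt(v'w)$ equals $y\dt w$ only if $y\dt v$ also lands on $y$, which need not happen, so the argument must be done with $w$ applied directly: choose $v'$ arbitrary, then append $w$ — but $w$ need not identify $x\dt vv'$ with $y\dt vv'$.

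The clean way around this is the standard "first shrink, then walk" argument: set $z:=p$ and $z':=q$ (the two surviving states) and show $(p,q)\in\sigma$, or rather that \emph{some} pair is in $\sigma$, by a counting/pigeonhole argument on iterated images of $\{p,q\}$. Precisely, consider the sequence of two-element sets $\{p,q\}=P_0,P_1,P_2,\dots$ where $P_{i+1}=P_i\dt a_i$ for a suitable choice of letters; since there are only finitely many two-element subsets of $Q$ and (by strong connectivity) we can always move a two-element set to any other two-element set that is reachable, either some reachable two-element set is collapsible — done — or the reachable two-element sets form a single strongly connected "block system" of size $2$, which is an invariant structure that contradicts primitivity unless $|Q|\le 2$. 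This last dichotomy is exactly the mechanism behind Neumann's original lemma, and pushing it from the permutation-group setting to the transformation-monoid setting of a general primitive DFA is the step I expect to be the main obstacle: one must verify that the family of two-element reachable sets, if none is collapsible, assembles into a congruence of $\mA$ (the classes being the blocks of a partition of $Q$ into pairs, glued along the reachability relation), and that this congruence is neither the equality (its classes have size~$\ge 2$) nor, when $|Q|\ge 3$, the universal relation (its classes have size exactly $2<|Q|$) — a genuine, proper, nontrivial congruence, contradicting primitivity. Hence some reachable two-element set is collapsible, $\mA$ is synchronizing, and in particular $\sigma$ is not the equality; primitivity and Lemma~\ref{lem:stability} finish the proof.
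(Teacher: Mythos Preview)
Your overall strategy---reduce to the strongly connected case via Proposition~\ref{prop:reduction}, then show the stability relation is nontrivial---matches the paper's. The gap is in your last paragraph, where you assert the dichotomy ``either some reachable two-element set is collapsible, or the reachable two-element sets form a block system (a partition of $Q$ into pairs)''. This dichotomy holds in Neumann's permutation setting because the images $S\dt u$ under invertible maps really do form a system of blocks, but it fails for general DFAs: nothing prevents two sets $S\dt u_1=\{p,p_1\}$ and $S\dt u_2=\{p,p_2\}$ from overlapping in a single state $p$ while every $S\dt u$ still has size~$2$. In that situation the congruence you generate from the pairs $S\dt u$ is perfectly well-defined and nontrivial, but it can be the universal relation, so primitivity gives you nothing. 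Your ``glued along the reachability relation'' remark hints that you sensed the classes might exceed size~$2$, but once they do, the contradiction evaporates.

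The paper closes exactly this gap, and it is the heart of the argument. When $S\dt u_1=\{p,p_1\}$ and $S\dt u_2=\{p,p_2\}$ overlap with $p_1\ne p_2$, one looks at $\{p,p_1,p_2\}\dt vw$ for an arbitrary $v\in\Sigma^*$. This three-element set lands inside the two-element set $Q\dt w=S$, so two of $p\dt vw$, $p_1\dt vw$, $p_2\dt vw$ coincide. But $\{p,p_i\}\dt vw=S\dt u_ivw$ has size~$2$ by the standing assumption, so $p\dt vw\ne p_i\dt vw$ for $i=1,2$; the only option left is $p_1\dt vw=p_2\dt vw$. Since $v$ was arbitrary, $(p_1,p_2)\in\sigma$, and now Lemma~\ref{lem:stability} and primitivity finish as you intended. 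Note that $\{p_1,p_2\}$ need not be of the form $S\dt u$, so ``some reachable two-element set is collapsible'' is not what one proves---one goes straight to the stability relation, which is precisely the tool your earlier attempts were circling around before you abandoned it.
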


\begin{proof}
Arguing by contradiction, assume that $\mA$ is not synchronizing. Then $\mA$ is \scn\ by Proposition~\ref{prop:reduction}.

The set $S:=Q\dt w$ consists of two states since $w$ has rank 2. Moreover, $|S\dt u|=2$ for every word $u\in\Sigma^*$ because if $|S\dt u|=1$ for some $u\in\Sigma^*$, then $wu$ would be a reset word for $\mA$ in a contradiction to the assumption that $\mA$ is not synchronizing. Let $S=\{s,s'\}$. Since $\mA$ is \scn, for every state $q\in Q$, there exists a word $u\in\Sigma^*$ such that $s\dt u=q$, that is, $q\in S\dt u$. Thus, $Q$ is a union of 2-element sets of the form $S\dt u$, $u\in\Sigma^*$. If different sets of this form are pairwise disjoint, then the relation
\[
\rho:=\{(q,q')\in Q\times Q\mid \exists u\in\Sigma^*\ q,q'\in S\dt u\}
\]
is easily seen to be a congruence on $\mA$. Since each $\rho$-class consists of two states and $\mA$ has at least three states, $\rho$ is not the universal relation, nor is $\rho$ the equality. This contradicts the primitivity of $\mA$.

Thus, there exist two words $u_1,u_2\in\Sigma^*$ such that the subsets $S_1:=S\dt u_1$ and $S_2:=S\dt u_2$ are different but have a common state, say, $p$. If $p_i$ is the other state in $S_i$, $i=1,2$, we have $p_1\ne p_2$. Take an arbitrary word $v\in\Sigma^*$ and consider the set $\{p,p_1,p_2\}\dt vw$. It is contained in the 2-element subset $S=Q\dt w$ whence amongst the three states $p\dt vw,p_1\dt vw,p_2\dt vw$, some two must be equal. We have $p\dt vw\ne p_i\dt vw$ since $\{p\dt vw,p_i\dt vw\}=S_i\dt vw=S\dt u_ivw$ for $i=1,2$, and each set of the form $S\dt u$, $u\in\Sigma^*$, consists of two states. The only remaining option is $p_1\dt vw=p_2\dt vw$.

We have thus proved that the pair $(p_1,p_2)$ with $p_1\ne p_2$ belongs to the stability relation $\sigma$ on $\mA$. Hence, $\sigma$ is not the equality, and since $\mA$ is primitive, Lemma~\ref{lem:stability}(1) implies that $\sigma$ is the universal relation. By Lemma~\ref{lem:stability}(2), $\mA$ is synchronizing, a contradiction.
\end{proof}

It was mentioned after Corollary~\ref{cor:defect1} that the presence of a letter of defect 2 forces synchronizability for permutation-primitive DFAs but fails to do so for general primitive DFAs. Continuing this line of discussion, one may ask whether primitivity implies synchronizability in the presence of a word or a letter of rank 3. The answer is negative even for the permutation-primitive case. An example of a non-synchronizing permutation-primitive DFA with nine states and a letter of rank 3 is described after Theorem 4 in~\cite{AC14}; the graph of this DFA is shown in Fig.~\ref{fig:grid}, where the loops have been omitted for better readability. 
\begin{figure}[htb]
\begin{tikzpicture}[node distance=2cm,scale=1,every node/.style={transform shape},bend angle=10]
\node[state, inner sep=0pt] (q00) {$q_{00}$};
\node[state, inner sep=0pt] [right=of q00] (q01) {$q_{01}$};
\node[state, inner sep=0pt] [right=of q01] (q02) {$q_{02}$};
\node[state, inner sep=0pt] [below=of q00] (q10) {$q_{10}$};
\node[state, inner sep=0pt] [right=of q10] (q11) {$q_{11}$};
\node[state, inner sep=0pt] [right=of q11] (q12) {$q_{12}$};
\node[state, inner sep=0pt] [below=of q10] (q20) {$q_{20}$};
\node[state, inner sep=0pt] [right=of q20] (q21) {$q_{21}$};
\node[state, inner sep=0pt] [right=of q21] (q22) {$q_{22}$};
\draw(q00) edge[in=80,out=280] node[auto,midway]{$b$} (q10);
\draw(q10) edge[in=260,out=100] node[auto,midway]{$b$} (q00);
\draw(q01) edge[in=80,out=280] node[auto,midway]{$b$} (q11);
\draw(q11) edge[in=260,out=100] node[auto,midway]{$b$} (q01);
\draw(q02) edge[in=80,out=280] node[auto,midway]{$b$} (q12);
\draw(q12) edge[in=260,out=100] node[auto,midway]{$b$} (q02);
\draw(q10) edge[in=80,out=280] node[auto,midway]{$c$} (q20);
\draw(q20) edge[in=260,out=100] node[auto,midway]{$c$} (q10);
\draw(q11) edge[in=80,out=280] node[auto,midway]{$c$} (q21);
\draw(q21) edge[in=260,out=100] node[auto,midway]{$c$} (q11);
\draw(q12) edge[in=80,out=280] node[auto,midway]{$c$} (q22);
\draw(q22) edge[in=260,out=100] node[auto,midway]{$c$} (q12);
\draw(q01) edge[bend left] node[auto,midway]{$e$} (q10);
\draw(q10) edge[bend left] node[auto,midway]{$e$} (q01);
\draw(q21) edge[bend left] node[auto,midway]{$e$} (q12);
\draw(q12) edge[bend left] node[auto,midway]{$e$} (q21);
\draw(q02) edge[bend angle=13,bend left] node[auto,midway]{$e$} (q20);
\draw(q20) edge[bend angle=13,bend left] node[auto,midway]{$e$} (q02);
\draw(q01) edge node[auto,midway]{$a$} (q00);
\draw(q02) edge[bend angle=18,bend right] node[swap,midway]{$a$} (q00);
\draw(q10) edge node[auto,midway]{$a$} (q11);
\draw(q12) edge node[auto,midway]{$a$} (q11);
\draw(q21) edge node[auto,midway]{$a$} (q22);
\draw(q20) edge[bend angle=18,bend right] node[swap,midway]{$a$} (q22);
\end{tikzpicture}
\caption{A non-synchronizing permutation-primitive DFA from \cite{AC14} with a letter of rank 3. If the action of a letter $x\in\{a,b,c,e\}$ at some state is not shown, then the loop labeled $x$ is assumed.}\label{fig:grid}
\end{figure}

Each of the two 5-state non-synchronizing primitive DFAs from Fig.~\ref{fig:PrimitiveNonSynchro} also has a letter of rank 3, but these DFAs are not permutation-primitive.

\begin{rmk}
\emph{We could have stated and proved Corollary~\ref{cor:defect1} in the form used in Theorem~\ref{thm:rank2}, that is, requesting the presence of a} word \emph{of defect 1 rather than a letter of defect 1. However, this would not be a generalization since a DFA having a word of defect 1 necessarily has a letter of defect 1. In contrast, a DFA can have a word of rank 2 without having any letter of rank 2.}
\end{rmk}

\subsection*{Acknowledgments} This note was inspired by the conjectures and findings presented in the preprint \cite{RS23} by Igor Rystsov and Marek Szyku\l{}a. I would like to express my gratitude to them for sharing their unpublished manuscript with me before it was posted on arXiv.

My research was supported by the Ministry of Science and Higher Education of the Russian Federation, project FEUZ-2023-0022.

\end{document}